\newtheorem{theorem}{Theorem}
\newtheorem{corollary}{Corollary}[theorem]
\newcommand{\thickhline}{
    \noalign {\ifnum 0=`}\fi \hrule height 1pt
    \futurelet \reserved@a \@xhline
}
\newcolumntype{"}{@{\hskip\tabcolsep\vrule width 1pt\hskip\tabcolsep}}
\begin{document}
\title{Deterministic remote preparation of an arbitrary qubit state using a partially entangled state and finite classical communication}

\author{Congyi Hua}
 \affiliation{Zhejiang Institute of Modern Physics, Zhejiang University, Hangzhou 310027, China}
\author{Yi-Xin Chen}
 \email{yixinchenzimp@zju.edu.cn}
 \affiliation{Zhejiang Institute of Modern Physics, Zhejiang University, Hangzhou 310027, China}

\begin{abstract}
We propose a deterministic remote state preparation (RSP) scheme for preparing an arbitrary (including pure and mixed) qubit, where a partially entangled state and finite classical communication are used. To our knowledge, our scheme is the first RSP scheme that fits into this category. One other RSP scheme proposed by Berry shares close features, but can only be used to prepare an arbitrary pure qubit. Even so, our scheme saves classical communication by approximate 1 bit per prepared qubit under equal conditions. When using a maximally entangled state, the classical communication for our scheme is 2 bits, which agrees with Lo's conjecture on the resource cost. Furthermore Alice can switch between our RSP scheme and a standard teleportation scheme without letting Bob know, which makes the quantum channel multipurpose.
\end{abstract}
\pacs{03.65.Ud, 03.67.-a, 03.67.Hk}
\maketitle

\section{Introduction}

Relying on quantum entanglement~\cite{horodecki_quantum_2009}, quantum communication protocols can present abilities that are unachievable by their classical counterparts. One notable example is quantum teleportation~\cite{bennett_teleporting_1993}. Using two entangled qubits and 2 bits of classical communication as resources, teleportation enables Alice to transmit an \textit{unknown} qubit state to Bob without physically transporting it. A variation of teleportation is remote state preparation (RSP)~\cite{lo_classical-communication_2000,pati_minimum_2000,bennett_remote_2001,devetak_low-entanglement_2001,luo_faithful_2012}, wherein Alice produces a \textit{known} qubit state at Bob's location by using entanglement and classical communication. Depending on the resource requirement, preparable state ensemble, and even successful rate, different kind of RSP scheme has been devised for carrying out the specific task.

In most RSP schemes, maximal entanglement, oftenest a Bell state, is assumed to be accessible. For example, the first RSP scheme proposed by Pati~\cite{pati_minimum_2000} uses a Bell state, realizing deterministic preparation of a qubit from a fixed great circle on the Bloch sphere. Other later proposed schemes extended the preparable ensembles to varying degrees for preparing more general qubits, like deterministic preparations of arbitrary pure qubits~\cite{hayashi_remote_2003,rosenfeld_remote_2007,liu_experimental_2007}, probabilistic preparations of arbitrary qubits~\cite{peters_remote_2005, liu_experimental_2007}, and more resently, schemes for deterministic preparations of arbitrary qubits were also implemented~\cite{killoran_derivation_2010, wu_deterministic_2010}. Unfortunately, these mentioned schemes are unadaptable to partially entangled resource states, which may occur in the real world. To devise RSP schemes that work with partial entanglement, new ways must be found.

Although it is more difficult to devise RSP schemes employing partial entanglement, Ye \textit{et al.}~\cite{ye_faithful_2004} proved that, at the expense of increased classical communication, it is possible to use a partially entangled state to perform deterministic RSP of an arbitrary pure state. Based on the proof, Berry~\cite{berry_resources_2004} proposed an explicit scheme for performing such an RSP. Thus the two-dimensional case of Berry's scheme provides a complete technique for preparing an arbitrary pure qubit. In Ref.~\cite{hua_scheme_2014}, we also proposed an optimization procedure that can be incorporated into Berry's scheme to reduce unnecessary classical communication. Despite all these efforts, to our best knowledge, as yet there is no deterministic RSP scheme that use partial entanglement to prepare arbitrary (include pure and mixed) qubit states, which we reported here.

In Sec.~\ref{secii}, by utilizing the ensembles that can be prepared using partial entanglement with only two bits of classical communication, we give our RSP scheme which trades off classical communication against entanglement to implement the preparation of an arbitrary qubit. Our results agree with Lo's conjecture on the resource cost for deterministic RSP. At the end of Sec.~\ref{secii} we briefly summarize our scheme, and show Alice can switch between this RSP and a standard teleportation without letting Bob know. In Sec~\ref{seciii}, we introduce a geometrical tool called Voronoi diagram for the calculation of the classical communication requirement for our scheme. The calculation shows our scheme saves approximate 1 bit of classical communication when compared with Berry's scheme for the preparation of an arbitrary pure qubit. The underlying cause of the resource saving will be explained. In Sec.~\ref{seciv}, we draw our conclusions.

\section{Deterministic remote preparation of an arbitrary qubit}
\label{secii}

In this section, we present a remote state preparation (RSP) scheme for Alice to prepare an arbitrary qubit state at Bob's location deterministically. The resources for performing this task are a partially entangled two-qubit state and finite bits of classical communication from Alice to Bob. The qubit to be prepared is known to Alice but unknown to Bob. Furthermore, we restrict the scheme to be oblivious, where the state to be prepared is known to Alice but unknown to Bob.

The initial setup of our RSP scheme is as follows. Assume Alice and Bob have shared an entangled resource state in the form
\begin{equation*}
\ket{r} =r _0\ket{00}_\text{AB}+r _1\ket{11}_\text{AB}\text{,}
\end{equation*}
or, equivalently,
\begin{equation}
\ket{r} =\cos  \frac{\theta _r}{2}\ket{00}_\text{AB}+\sin  \frac{\theta _r}{2}\ket{11}_\text{AB}\text{,}
\label{eqresource}
\end{equation}
where $0\leq r_1\leq r_0$ and $0\leq \theta_r\leq \pi$. Any entangled pure two-qubit state can be brought to this form via local unitary operations. The entanglement of $\ket{r}$ is quantified by the Von Neumann entropy of either of $\ket{r}$'s reduced states, namely,
\begin{equation}
E(\ket{r})=-r_0^2 \log r_0^2 -r_1^2 \log r_1^2\text{. }
\label{eqentanglement}
\end{equation} The scheme starts by a positive operator valued measurement (POVM) on Alice's system A. The measurement operators given by
\begin{multline}
\bigg\{M_m=p_m\left(\frac{1}{r_0}\ket{0}\bra{0}+\frac{1}{r_1}\ket{1}\bra{1}\right)\rho _m^T\\.\left(\frac{1}{r_0}\ket{0}\bra{0}+\frac{1}{r_1}\ket{1}\bra{1}\right)\bigg\}_{m=0}^3,
\label{eqpovm}
\end{multline}
where the superscript $T$ denotes the transposition, and the values of $p_m$ and $\rho _m$ are to be determined later. By implementing this POVM, Alice obtains a measurement outcome $m$ with probability calculated as $\bra{r}M_m\ket{r}=p_m$, and the corresponding state of Bob's post-measurement system will be $\text{tr}_\text{A}\left(M_m\ket{r}\bra{r}\right)/p_m=\rho _m$. Then, based on the obtained measurement result $m$, Alice instructs Bob to apply a corresponding unitary operation $U_m$ on his system by sending him 2 bits. The unitary operation $U_m$ is chosen from $\left\{I\text{,}\sigma _3\text{,}\sigma _1\text{,}-\sigma _3 \sigma _1\right\}$, which is a unitary opertation set duplicated purposely from a standard teleportation scheme, as we will discuss at the end of the section. Also, in order to transform Bob's system to a state deterministically, we make a convention that all $U_m \rho_m U_m^\dagger$ are equal.

Now, based on the above setup, we give the preparable ensemble of pure qubits in Theorem~\ref{theorem1}. This ensemble will be generalized to include mixed qubits by Corollary.~\ref{corollary1}.

\begin{theorem}
Using a preshared resource state $\ket{r} $ given by Eq.~(\ref{eqresource}) and 2 bits of classical communication from Alice to Bob, any pure qubit from the ensemble
\begin{multline*}
\bigg\{\ket{i(\theta_i,\phi_i)} =\cos  \frac{\theta _i}{2}\ket{0}+e^{i \phi _i}\sin  \frac{\theta _i}{2}\ket{1} \bigg\vert~\phi _i\in[0,2\pi)\\
\text{and }\theta _i\in[0,\theta _r]\cup[\pi -\theta _r,\pi]\bigg\}
\end{multline*}
can be remotely prepared. Particularly if $\ket{r}$ is a maximal entangled state, the above ensemble is represented by the entire Bloch sphere consisting of every pure qubit.
\label{theorem1}
\end{theorem}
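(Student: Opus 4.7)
The plan is to construct, for every target $\ket{i(\theta_i,\phi_i)}$ in the claimed ensemble, an explicit choice of $\{p_m,\rho_m\}_{m=0}^{3}$ that makes Eq.~(\ref{eqpovm}) a valid POVM and respects the stated convention $U_m\rho_m U_m^\dagger=\ket{i}\bra{i}$. I would enforce the convention at the start by defining $\rho_m=U_m^\dagger\ket{i}\bra{i}U_m$, so that Bob's unitary correction automatically yields the target. Because $\{U_m\}=\{I,\sigma_3,\sigma_1,-i\sigma_2\}$ is the Pauli group up to phase, the Bloch vector of $\rho_m$ is obtained from the target Bloch vector $\vec{n}=(n_x,n_y,n_z)$ by the four sign-correlated reflections $(n_x,n_y,n_z)$, $(-n_x,-n_y,n_z)$, $(n_x,-n_y,-n_z)$, and $(-n_x,n_y,-n_z)$.

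Next I would impose POVM completeness $\sum_m M_m=I$. Writing $D=r_0^{-1}\ket{0}\bra{0}+r_1^{-1}\ket{1}\bra{1}$, a one-line sandwich by $D^{-1}$ reduces completeness to $\sum_m p_m\rho_m = r_0^2\ket{0}\bra{0}+r_1^2\ket{1}\bra{1}$, which in Bloch form is the linear system
\begin{equation*}
\sum_m p_m=1,\qquad \sum_m p_m\vec{n}_m=(0,0,\cos\theta_r).
\end{equation*}
Away from the axis states, the $x$- and $y$-components force $p_0=p_1$ and $p_2=p_3$, and the $z$-component together with normalization then gives the unique solution $p_0=p_1=\tfrac14(1+\cos\theta_r/\cos\theta_i)$ and $p_2=p_3=\tfrac14(1-\cos\theta_r/\cos\theta_i)$. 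Non-negativity of the four $p_m$ is equivalent to $|\cos\theta_i|\ge\cos\theta_r$, i.e.\ $\theta_i\in[0,\theta_r]\cup[\pi-\theta_r,\pi]$, which is exactly the claimed ensemble; the maximally entangled case $\theta_r=\pi/2$ makes the bound vacuous and thus recovers the whole Bloch sphere.

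It then remains to verify that each $M_m\ge 0$, which is immediate from $p_m\ge 0$, $\rho_m\ge 0$, and the sandwich structure of Eq.~(\ref{eqpovm}), and that the outcome probability is $p_m$ with Bob's conditional state equal to $\rho_m$; both identities were asserted just after Eq.~(\ref{eqpovm}) and follow from $\ket{r}$ being Schmidt-diagonal in the computational basis. Degenerate targets for which some Bloch component of $\vec{n}$ vanishes leave part of the solution underdetermined but are covered by taking the same formulas as one consistent choice.

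The main obstacle is recognizing the geometric content of the setup: it is the specific choice $\{I,\sigma_3,\sigma_1,-i\sigma_2\}$ of unitary corrections that turns the four $\rho_m$ into sign-correlated Bloch reflections of one another, and this is precisely what decouples the three components of the completeness constraint and produces the clean positivity cutoff $|\cos\theta_i|\ge\cos\theta_r$. Once this geometric observation is in hand, the rest of the proof is direct linear algebra and bookkeeping.
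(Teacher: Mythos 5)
Your proof is correct and follows essentially the same route as the paper's: both enforce $\rho_m=U_m^\dagger\ket{i}\bra{i}U_m$, reduce the completeness/no-signaling constraint to $\sum_m p_m\rho_m=r_0^2\ket{0}\bra{0}+r_1^2\ket{1}\bra{1}$, solve the resulting linear system for the $p_m$, and read off the nonnegativity cutoff $|\cos\theta_i|\ge\cos\theta_r$ that defines the two antipodal caps. The differences are cosmetic (Bloch-vector reflections versus matrix elements) together with your slightly more explicit verification that $\sum_m M_m=I$, $M_m\ge 0$, and that the outcome probabilities and conditional states are as claimed, which the paper asserts without computation.
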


\begin{proof}
Assume the pure qubit that Alice wants to prepare is expressed as
\begin{multline*}
\ket{i} =\cos  \frac{\theta _i}{2}\ket{0}+e^{i \phi _i}\sin  \frac{\theta _i}{2}\ket{1}\equiv i_0\ket{0}+e^{i \phi _i}i_1\ket{1}\text{, }\\0\leq \theta _i\leq \pi\text{, }0\leq \phi _i<2 \pi\text{.}
\end{multline*}
Here $\theta _i$ and $\phi _i$ are the polar and azimuthal angles of $\ket{i}$ in the Bloch sphere representation. For Bob, before he receives the 2 classical bits from Alice, his system is in the state
\begin{equation}
\begin{split}
\sum_{m=1}^4 p_m \rho_m&=r_0^2\ket{0}\bra{0}+r_1^2\ket{1}\bra{1}\text{.}
\end{split}
\label{eqsystemb}
\end{equation}
Substitute $\rho _m\equiv U_m^\dagger\ket{i}\bra{i}U_m$ with $U_m\in\left\{I\text{,}\sigma _3\text{,}\sigma _1\text{,}-\sigma _3 \sigma _1\right\}$ into Eq.~(\ref{eqsystemb}), we obtain
\begin{multline}
\left(i_0^2 P_1+i_1^2 P_2\right)\ket{0}\bra{0}+\left(i_0^2 P_2+i_1^2 P_1\right)\ket{1}\bra{1}\\=r_0^2\ket{0}\bra{0}+r_1^2\ket{1}\bra{1}\text{,}
\label{eqcondition}
\end{multline}
where $P_1\equiv 2 p_0=2 p_1$ and $P_2\equiv 2 p_2=2 p_3$. Moreover, for a legitimate POVM in Eq.~(\ref{eqpovm}), $P_1, P_2\geq 0$ and $P_1+P_2=1$ are implied. Eq.~(\ref{eqcondition}) can be used as the necessary and sufficient condition for checking the preparablity of a pure qubit $\ket{i}$ (also see Eq.~(3) in Ref.~\cite{ye_faithful_2004}). It is easy to see, as long as $r_0\leq \max  \left\{i_0,i_1\right\}$, Eq.~(\ref{eqcondition}) is soluble for non-negative $P_1$, $P_2$ with $P_1+P_2=1$. Except for $r_0=i_0=\frac{1}{\sqrt{2}}$, where $P_1$
can be any value in $[0,1]$, the universal solution to Eq.~(\ref{eqcondition}) is 
\begin{equation}
P_1=\left(r_0^2-i_0^2\right)/{2 \left(\frac{1}{2}-i_0^2\right)}.
\label{eqp1}
\end{equation}
For consistency, we only use the universal solution in the follow discussion.

On the Bloch sphere, the ensemble of states that satisfies $r_0\leq \max  \left\{i_0,i_1\right\}$ is represented by an antipodal pair of spherical caps with $\theta _i\in[0,\theta _r]\cup[\pi -\theta _r,\pi]$.
Particularly when $\ket{r}$ is maximally entangled, we have $\theta _i\in[0,\pi/2]\cup[\pi/2,\pi]$, which means the ensemble is represented by the entire Bloch sphere consisting of every pure qubit.
\end{proof}

A mixed qubit describes a two-dimensional quantum system whose state is not completely known. One can suppose such a system is in a pure qubit state $\ket{i}$ with probability $p$ and in the maximally mixed qubit state with probability $1-p$. The density matrix for such a mixed qubit can be expressed as
$$\rho _i(p,\theta_i,\phi_i)=p\ket{i(\theta_i,\phi_i)}\bra{i(\theta_i,\phi_i)}+(1-p)\frac{I}{2},$$
where
$$\ket{i(\theta_i,\phi_i)} =\cos  \frac{\theta _i}{2}\ket{0}+e^{i \phi _i}\sin  \frac{\theta _i}{2}\ket{1}.$$
Here $p$, $\theta _i$ and $\phi _i$ uniquely identify the Bloch vector $\boldsymbol{r}_i=(p\sin\theta_i\cos\phi_i,p\sin\theta_i\sin\phi_i,p\cos\theta_i)$ which is related to the position of $\rho_i$ in the Bloch ball.
When $p=1$, $\rho _i$ degenerates into a pure qubit $\ket{i}$, while from the pure $\ket{i}$ to the mixed $\rho _i$, the Bloch vector shrinks by a factor $p$. One can verify, if Alice sends Bob totally random classical bits (00, 01, 10 or 11 each with probability 1/4), Bob' s system will end up with the maximally mixed qubit represented by the density matrix $I/2$. Therefore, if Alice wants to remotely prepare $\rho _i$, she only needs to replace the bits used for preparing $\ket{i}$ by totally random bits with probability $1-p$.  Obviously we have the following corollary.

\begin{corollary}
Using a preshared resource state $\ket{r} $ given by Eq.~(\ref{eqresource}) and 2 bits of classical communication from Alice to Bob, any qubit state from the ensemble
\begin{multline*}
\bigg\{\rho _i(p,\theta_i,\phi_i)=p\ket{i(\theta_i,\phi_i)}\bra{i(\theta_i,\phi_i)}+(1-p)\frac{I}{2}\bigg\vert\\p\in[0,1], \phi _i\in[0,2\pi) \text{ and }\theta_i\in[0,\theta _r]\cup[\pi -\theta _r,\pi]\bigg\}
\end{multline*}
can be remotely prepared. Particularly if $\ket{r}$ is maximally entangled, the above ensemble is represented by the entire Bloch ball consisting of every pure and mixed qubit.
\label{corollary1}
\end{corollary}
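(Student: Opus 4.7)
The plan is to leverage Theorem~\ref{theorem1} together with a simple randomization argument, since the corollary is essentially a convex-combination extension of the pure-state result.

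First I would formalize the preparation protocol as a probabilistic mixture of two subprotocols. Given the target $\rho_i(p,\theta_i,\phi_i)=p\ket{i}\bra{i}+(1-p)I/2$ with $\theta_i\in[0,\theta_r]\cup[\pi-\theta_r,\pi]$, Alice flips a biased coin: with probability $p$ she executes the Theorem~\ref{theorem1} protocol aimed at $\ket{i(\theta_i,\phi_i)}$, transmitting the 2 classical bits indicating which $U_m$ Bob must apply; with probability $1-p$ she ignores her quantum system (or performs any measurement she likes on it) and instead transmits 2 uniformly random bits, whereupon Bob applies the corresponding $U_m\in\{I,\sigma_3,\sigma_1,-\sigma_3\sigma_1\}$. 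In both branches exactly 2 bits are sent, so the classical communication cost is unchanged.

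Next I would verify the random-bit branch outputs $I/2$ on Bob's side. Before receiving any message, Bob's reduced state from $\ket{r}$ is $\rho_B=r_0^2\ket{0}\bra{0}+r_1^2\ket{1}\bra{1}$. When the two bits are uniformly random and independent of $\rho_B$, Bob's post-operation state is the twirl
\begin{equation*}
\tfrac{1}{4}\sum_{m=0}^{3}U_m\rho_B U_m^\dagger.
\end{equation*}
A direct computation using $\sigma_3\rho_B\sigma_3=\rho_B$ and $\sigma_1\rho_B\sigma_1=(-\sigma_3\sigma_1)\rho_B(-\sigma_3\sigma_1)^\dagger=r_1^2\ket{0}\bra{0}+r_0^2\ket{1}\bra{1}$ shows the twirl equals $\tfrac{1}{2}(r_0^2+r_1^2)I=I/2$; this is just the familiar statement that the four Pauli(-like) operators form a unitary 1-design.

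Finally I would combine the two branches by linearity. Because Bob does not know which branch Alice followed, his overall state is the convex combination $p\ket{i}\bra{i}+(1-p)(I/2)=\rho_i(p,\theta_i,\phi_i)$, which is exactly the desired mixed qubit. The admissible range $\theta_i\in[0,\theta_r]\cup[\pi-\theta_r,\pi]$ is inherited directly from Theorem~\ref{theorem1}, and the maximally-entangled case $\theta_r=\pi/2$ makes the whole Bloch ball accessible. The only potentially subtle step is the twirling identity, but it reduces to a short $2\times 2$ matrix computation, so I do not expect any serious obstacle; the bulk of the work is really the careful bookkeeping that this mixed strategy remains oblivious and still uses only 2 bits of communication.
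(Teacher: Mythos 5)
Your proposal is correct and follows essentially the same route as the paper: the paper likewise observes that sending uniformly random 2-bit messages leaves Bob with $I/2$ (your twirl computation just makes this explicit) and then prepares $\rho_i$ by running the Theorem~\ref{theorem1} protocol for $\ket{i}$ with probability $p$ and randomizing the classical bits with probability $1-p$. No gaps.
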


Lo~\cite{lo_classical-communication_2000} conjectured that, with unlimited entanglement resource, deterministic preparation of an arbitrary pure qubit at Bob's location requires $2$ bits of classical communication. Bennet \textit{et al.}~\cite{bennett_remote_2001} have proved in a more restricted condition, where Bob is restricted to perform a unitary operation and is oblivious to the prepared state, the RSP must use at least 2 bits of classical communication. Our setup meets this condition, because when $\ket{r}$ is maximally entangled we have $P_1=1/2$ from Eq.~(\ref{eqp1}), which means Bob cannot extract from the classical communication any information about the prepared state (thus is oblivious). Now, by treating Theorem~\ref{theorem1} and Corollary~\ref{corollary1} as complements to Bennet \textit{et al.}'s proof, we know 2 bits of classical communication are both necessary and sufficient (even for preparing an arbitrary qubit).

Due to the strictly increasing relation between $\theta_r$ and $E(\ket{r})$, a less entangled $\ket{r}$ will lead to smaller preparable ensembles in Theorem~\ref{theorem1} and Corollary~\ref{corollary1}. In the following discussion, we show how to trade off classical communication for reduced entanglement to enable preparation of an arbitrary target qubit $\rho_t$. 

We use $C_1$ to denote the spherical cap lying at the north pole mentioned in Theorem~\ref{theorem1}, and $C_{-1}$ the antipodal one. The pair of $C_1$ and $C_{-1}$ is denoted by $C_{1,-1}$. The size of $C_{1,-1}$ can be directly measured by the parameter $\theta_r$. If $\ket{r}$ is partially entangled, then $C_{1,-1}$ will not cover the entire Bloch sphere. In order to prepare a target state $\rho_t$ with Bloch vector $\boldsymbol{r}_t$ outside the convex hull of $C_{1,-1}$, Alice and Bob need to proceed as follows. First, before the preparation begins, they need to determine a $K$-element rotation operation set $\left\{R_j\right\}_{j=1}^K$ that is dependant on $\ket{r}$. Then, Alice deliberately prepares the intermediate state $\rho_i=R_j^{-1}\rho_t R_j$, whose Bloch vector is inside the convex hull of $C_{1,-1}$. By sending Bob $\log{K}$ bits, Alice instructs him to use $R_j$ from $\left\{R_j\right\}_{j=1}^K$ to transforms his system into $\rho_t$.

The effect of a rotation operation on a state is to rotate the state's Bloch vector by a fixed angle about some axis of the Bloch ball. Let's suppose each $R_j$ maps the spherical cap $C_1$ to $C_j$ (and thus $C_{-1}$ to $C_{-j}$). Although the position of $C_{j,-j}$ may vary, all of them are the same size as $C_{1,-1}$ and the union of $\left\{C_{j,-j}\right\}_{j=1}^K$ must cover the entire Bloch sphere. As pointed out earlier, if $E(\ket{r})$ is reduced, the size of $C_{j,-j}$ will decrease too. This generally results in an increased classical communication cost as $K$ tends to become larger to ensure total coverage. The resource trade-off is inevitable for an RSP scheme of this type, but making $\left\{C_{j,-j}\right\}_{j=1}^K$ uniformly distributed can avoid overcommunication. The problem of how to uniformly distribute $\left\{C_{j,-j}\right\}_{j=1}^K$ can be rephrased as how to construct uniformly distributed $2K$ points with antipodal symmetry on the Bloch sphere, for we can use these points as the spherical caps' centers. The points construction method we use has been put into Appendix.

Now we summarize our scheme for remote preparation of an arbitrary qubit $\rho_t$ as the two-stage procedure below.

Stage 1. Using the POVM~(\ref{eqpovm}) and the unitary operations $I\text{, }\sigma _3\text{, }\sigma _1\text{, }-\sigma _3 \sigma _1$ to prepare an intermediate state $\rho_i=R_j^{-1}\rho_t R_j$ that belongs to the ensemble given in Corollary~\ref{corollary1}. The classical communication cost here is 2 bits. If $\ket{r}$ is maximally entangled, by setting $\left\{R_j\right\}_{j=1}^K\equiv\{I\}$, $\rho_t$ can be prepared within this stage.

Stage 2. If $\ket{r}$ is non-maximally entangled, Bob performs $R_j$ from the predefined set $\left\{R_j\right\}_{j=1}^K$ to transform $\rho_i$ to $\rho_t$. The classical communication cost in this stage is $\log{K}$ bits, which is traded off against $E(\ket{r})$.

Different from RSP, teleportation can only be carried out when a maximally entangled resource state is available~\cite{bennett_teleporting_1993}. In a standard teleportation scheme, where a Bell state $\ket{\Phi^+}\equiv\frac{1}{\sqrt 2}\ket{00}+\frac{1}{\sqrt 2}\ket{11}$ is used, Bob needs to apply a unitary operation chosen from $\left\{I\text{,}\sigma _3\text{,}\sigma _1\text{,}-\sigma _3 \sigma _1\right\}$ after he receives the outcome of Alice's Bell basis measurement. The same unitary operation set is used in our RSP scheme. As we have said, this choice is made on purpose, because one can see when $\ket{r}$ is maximally entangled, i.e., $\ket{r}=\ket{\Phi^+}$, regardless of the target state the probability that Bob uses any one of these unitary operations in both schemes is always 1/4. There is no chance for Bob to tell which scheme is being performed, Alice can switch between teleportation and RSP unilaterally.

\section{Classical communication cost}

\label{seciii}
The total classical communication cost for our scheme is $2+\log{K}$ bits. For a given $K$, the entanglement of the resource state $\ket{r}$ cannot below a certain lower bound, otherwise there always exist some unpreparable qubit states for our scheme. To calculate the lower bound of $E(\ket{r})$, we need to make use of a geometry tool called \textit{Voronoi diagram}. A Voronoi diagram is a partition of a space into regions based on distance to some specific points called sites. For each site, the corresponding region, called Voronoi cell, consists all points closer to this site than to any other. All Voronoi cells are polygon-shaped with edges equidistant from two sites and vertices equidistant from three or more sites. Fig.~\ref{fig1} gives an illustration of the Voronoi diagram generated from uniformly distributed 64 ($K=32$) points with antipodal symmetry.

\begin{figure}
	\includegraphics[width=0.33\textwidth]{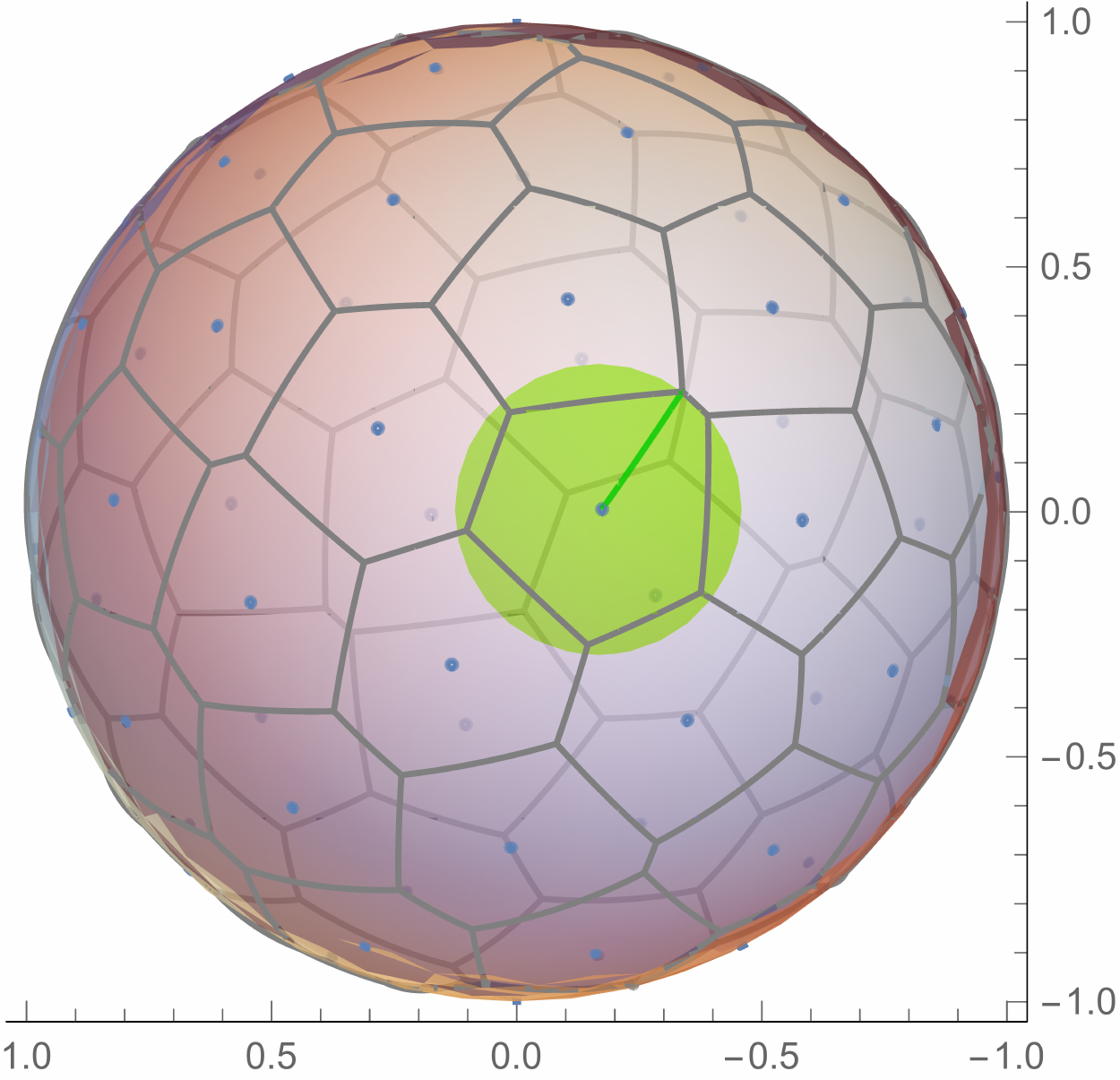}
	\caption{Voronoi diagram generated from uniformly distributed 64 points with antipodal symmetry on the Bloch sphere. The mesh on the sphere shows the Voronoi cells corresponding to thses points. The green segment is the longest site-vertex geodesic line. To cover the entire Bloch sphere, every $C_j$ must be no smaller than the green area.}
	\label{fig1}
\end{figure}

If we treat the centers of $C_j$, $j=\pm 1,...,\pm K$ as sites  denoted by $\boldsymbol{s}_j$, a Voronoi diagram can be generated. The necessary and sufficient condition for the union of all $C_j$'s to cover the entire Bloch sphere is that every $C_j$ covers the Voronoi cell corresponding to $\boldsymbol{s}_j$. We denotes by $\boldsymbol{v}_{j,k}$ the $k$th vertex of the Voronoi cell based on $\boldsymbol{s}_j$. After obtaining the coordinates of all $v_{i,j}$ numerically by computer~\cite{zheng_plane_2011}, the lower bound of $E(\ket{r})$ can be calculated from Eq.~(\ref{eqentanglement}) and $$\text{max}\big\{\arccos(\boldsymbol{v}_{j,k}\cdot\boldsymbol{s}_j)\big\vert\text{ for all valid }(j,k)\big\}\leq\theta_r.$$ Using the point sets from Appendix, the results for $K=2^n, n=1,2,...,10$ are both listed in Table~\ref{table1} and ploted in Fig.~\ref{fig2}.

\begin{table*}[ht]
\centering
\def\arraystretch{1.3}
\begin{tabular}{ >{\centering\let\newline\\\arraybackslash\hspace{0pt}}m{2cm}||>{\centering}m{1.4cm}|>{\centering}m{1.4cm}|>{\centering}m{1.4cm}|>{\centering}m{1.4cm}|>{\centering}m{1.4cm}|>{\centering}m{1.4cm}|>{\centering}m{1.4cm}|>{\centering}m{1.4cm}|>{\centering}m{1.4cm}|c}
\thickhline
$K$ & 2 & 4 & 8 & 16 & 32 & 64 & 128 & 256 & 512 & 1024\\
\hline
Total bits \newline cost & 3 & 4 & 5 & 6 & 7 & 8 & 9 & 10 & 11 & 12\\
\hline
Lower bound of $E(\ket{r})$ & 1 & 0.744008 & 0.502988 & 0.236295 & 0.155618 & 0.094967 & 0.056478 & 0.033252 & 0.018274 & 0.010069\\
\thickhline
\end{tabular}
\caption{For $K$ with a value no more than 128, we use the optimal point sets as input in calculation, while when $K=256,512,1024$ we use Koay's point sets (see Appendix). }
\label{table1}
\end{table*}

An earlier RSP scheme proposed by Berry~\cite{berry_resources_2004} can be used for preparing arbitrary pure qubits. We find it also fits into our two-stage procedure summarized in Sec.~\ref{secii}, just by replacing mixed $\rho_i$ and $\rho_t$ with pure $\ket{i}$ and $\ket{t}$, respectively. The main difference is, in Berry's scheme, the intermediate state ensemble is represented by the spherical cap on Bloch sphere given by
\begin{equation*}
\bigg\{\ket{i} =\cos  \frac{\theta _i}{2}\ket{0}+e^{i \phi _i}\sin  \frac{\theta _i}{2}\ket{1} \bigg\vert~\theta _i\in[0,\theta _r]\bigg\},
\end{equation*}
which is only half-size of the intermediate state ensemble in the Theorem~\ref{theorem1}. So two classical bits of communication will not be sufficient for Berry's scheme for preparing an arbitrary pure qubit, even when the resource state is maximally entangled. The double-sized intermediate state ensemble in our scheme reduces the number of elements in devising the unitary operation set $\left\{R_j\right\}_{j=1}^K$ to nearly half and eventually cut down the total classical communication cost by approximate 1 bit per pure qubit. The approximation is caused by the different symmetries employed in constructing uniformly distributed spherical caps in these two schemes. In Fig.~\ref{fig2}, we include the result from~\cite{hua_scheme_2014} for comparison.

\begin{figure}
	\includegraphics[width=0.47\textwidth]{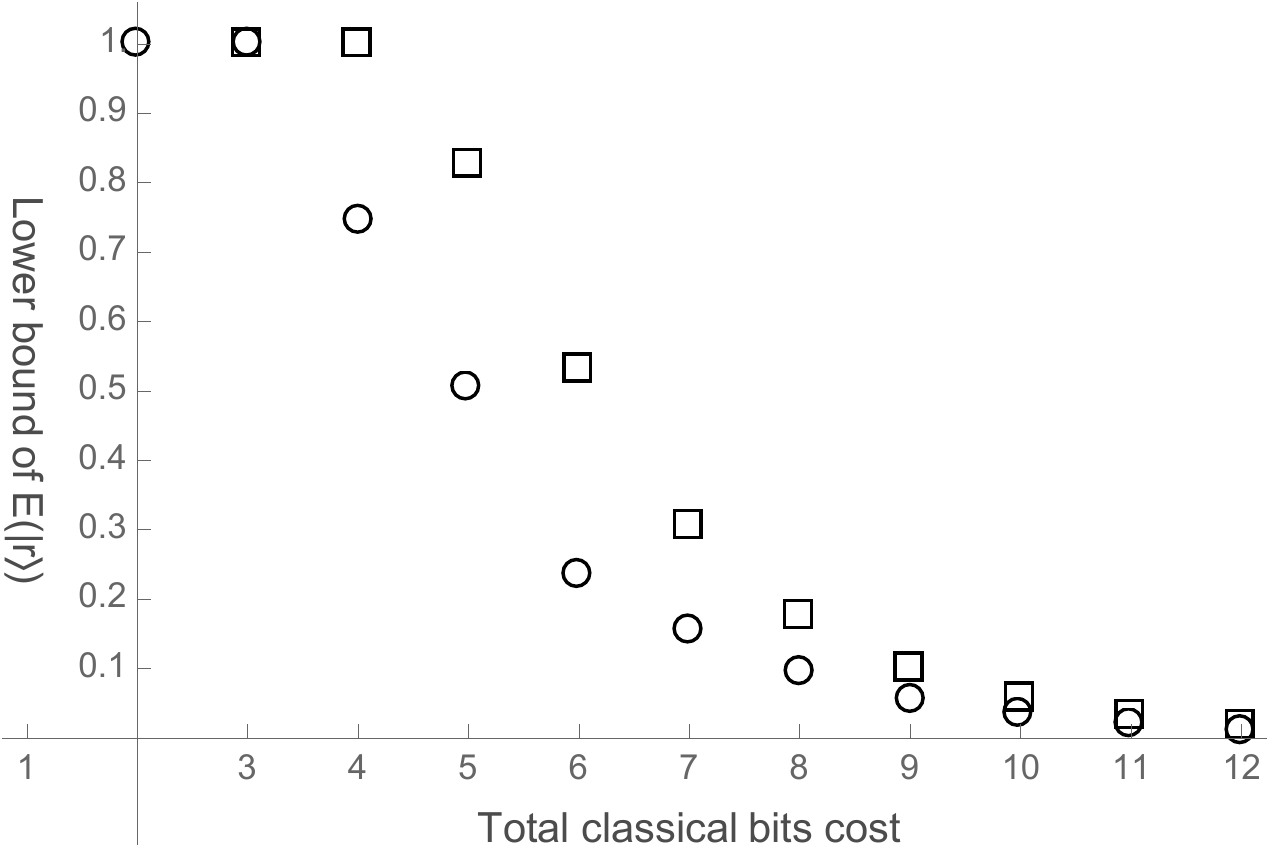}
	\caption{The total classical bits cost versus the lower bound of $E(\ket{r})$ for RSP of general qubits. The circles represent the required resource in the RSP scheme proposed in this paper for preparing an arbitrary (including pure and mixed) qubit. The squres represent the result of Berry's scheme (after optimization) for preparing an arbitrary pure qubit based on the data from Ref.~\cite{hua_scheme_2014}.}
	\label{fig2}
\end{figure}

\section{Conclusions}
\label{seciv}
We have proposed an RSP scheme for remotely preparing a general qubit by using any pure entangled state and finite classical bits. Our scheme can be treated as a two-stage procedure. If a maximally entangled resource state is available, the target qubit can be directly prepared in the first stage with 2 bits of classical communication, which agrees with Lo's conjecture on the resource cost for deterministic RSP. If the resource state is only partially entangled, an additional rotation operation will be performed in the second stage to transform the intermediate state prepared in the first stage to the final target state. The total classical communication cost is shown to be traded off against the resource state's entanglement. To the best of our knowledge, our scheme is the first deterministic RSP scheme for preparing an arbitrary qubit using a partially entangled state and finite classical communication. Theoretically, Our technique can be generalized to a higher dimension, but the geometry of qudits $(d>2)$ may be hard to deal with.

Our scheme also shares the same unitary operation set with the standard teleportation scheme. The benefit is when the resource state $\ket{r}$ is maximally entangled, Alice can switch between teleportation and RSP without letting Bob know, because no matter in which scheme Bob always performs a unitary operation $U_m\in\left\{I\text{,}\sigma _3\text{,}\sigma _1\text{,}-\sigma _3 \sigma _1\right\}$ with probability 1/4. This feature can make an entangled channel more versatile without sacrificing flexibility.

\section*{Acknowledgments}
We thank Cheng Guan Koay and Lin Chen for valuable discussion. We also gratefully acknowledge the support by NNSF of China, Grant No. 11375150.

\section*{Appendix: Uniform distribution of antipodally symmetric points on the unit sphere}
On a sphere, point sets with antipodal symmetry have special importance in both scientific and engineering fields, many works has been published for generating such sets. The methods for generating a $2K$-element set with antipodal symmetry usually contain a minimization procedure of electrostatic potential energy. For the number of elements within a few hundreds, the point sets are tabulated online~\cite{_optimal_????}. However, for the number of points in these point sets beyond a few hundreds, the optimization procedure will become unwieldy. To solve this problem, we can use instead some constructive methods to generate nearly uniform point sets with antipodal symmetry, which give very close results especially when $K$ is large. In this work for generating antipodally symmetric point sets with $K>256$ , we use a simple deterministic points construction scheme proposed by Koay~\cite{koay_simple_2011}. For the unit sphere, the spherical coordinates $\left(1,\theta _i,\phi _{i,j}\right)$ of the points on the upper hemisphere is given by:
\begin{align*}
\theta_i&=(i-\frac{1}{2})\frac{\pi }{[N]},           &  i&=1,2,...,[N],\\
\phi _{i,j}&=(j-\frac{1}{2})\frac{2\pi }{K_i},    &  j&=1,2,...,K_i.
\end{align*}
where $N$ is the solution to $N=\frac{K}{2}\sin{\frac{\pi}{4N}}$, $[\cdot]$ is the function which gives the integer closest to the input, and
\begin{equation*}
K_i=
\begin{cases}
\left[\frac{2 \pi\sin  \theta _i}{\pi\csc{\frac{\pi}{4[N]}}}K\right],  &i=1,2,...,[N]-1,\\
K-\sum_{i=1}^{[N]-1}K_i,  &i=[N].
\end{cases}
\end{equation*}

\end{document}